\newcommand{\bla}{\color{black}}
\newcommand{\be}{\begin{equation}}
\newcommand{\ee}{\end{equation}}
\newcommand{\ba}{\begin{eqnarray}}
\newcommand{\ea}{\end{eqnarray}}
\newcommand{\ketbra}[2]{|#1\rangle \langle #2|}
\newcommand{\tr}{\operatorname{Tr}}
\newcommand{\half}{\frac{1}{2}}
\newcommand{\etal}{{\it{et. al. }}}
\newtheorem{thm}{Theorem}
\begin{document}
\title{Nonclassicality  of  local bipartite  correlations}  \author{C.
  Jebaratnam}  \email{jebarathinam@gmail.com} \affiliation{S.  N. Bose
  National Center  for the Basic Sciences,  Kolkata, India} 
  \author{S. Aravinda} 
\affiliation{Institute of Mathematical Sciences, Chennai, India}
\author{R.   Srikanth} \affiliation{Poornaprajna Institute
  of Scientific Research, Bangalore, Karnataka, India}

\begin{abstract}
Simulating  quantum  nonlocality   and  steering  requires  augmenting
pre-shared  randomness with  non-vanishing  communication cost.   This
prompts  the question  of  how  one may  provide  such an  operational
characterization  for  the quantumness  of  correlations  due even  to
unentangled states.  Here we show that  for a certain class of states,
such  quantumness can  be pointed  out by  \textit{superlocality}, the
requirement for  a larger  dimension of  the pre-shared  randomness to
simulate  the  correlations  than  that  of  the  quantum  state  that
generates   them.    This  provides   an   approach   to  define   the
nonclassicality   of  local   multipartite   correlations  in   convex
operational theories.
\end{abstract}
\pacs{03.65.Ud, 03.67.Mn, 03.65.Ta}

\maketitle

\section{Introduction}

Local measurements on a spatially separated quantum system can lead to
a nonclassical box (set of  correlations) which cannot be explained by
shared  classical  randomness  \cite{Bel64,BCP+14}.  This  feature  of
quantum correlations  is called  nonlocality and  is witnessed  by the
violation  of  a Bell  inequality,  which  must  be satisfied  by  the
correlations  that   admit  a   local  hidden  variable   (LHV)  model
\cite{Bel64}.   The fact  that the  nonlocality of  quantum theory  is
limited \cite{Cir80} led Popescu  and Rohrlich to propose nonsignaling
correlations  which  are  more   nonlocal  than  quantum  correlations
\cite{PR94,Jeba}.    One  of   the  goals   of  studying   generalized
non-signaling  probability  theories  is  to find  out  what  physical
principles limit quantum nonlocality \cite{Pop14}.

Concepts  like  quantum   discord  \cite{OZ01,HV01,MBA+12}  and  local
broadcasting \cite{PHH08}  indicate the existence of  quantumness even
in separable states, and can  be associated with the non-commutativity
of measurements  \cite{Guo16}.  It  is known  that the  observation of
nonlocality or Einstein-Podolsky-Rosen (EPR) steering also implies the
presence  of incompatibility  of  measurements  \cite{Ban15}. From  an
operational   perspective,  nonlocal   or  steerable   states  require
augmenting  pre-shared  randomness  with non-zero  communication  cost
\cite{TB03,SAB+16}.

Here  we are  concerned  with the  question  of how  to  give such  an
operational characterization to the  quantumness of local correlations
arising from non-commuting measurements performed on separable states.
By definition, such a box clearly requires zero communication cost.

We partially answer this question  by providing evidence that for some
such states  in the bipartite two-input-two-output  Bell scenario, the
dimension of  the pre-shared randomness  required to simulate  the box
exceeds the dimension of the quantum  system generating it.  This is a
specific   case  of   superlocality   \cite{DW15}.    The  idea   that
superlocality occurs even in separable states implicitly finds mention
in  \cite{GBS16} (cf.   in  particular, Fig.   3  there).  A  detailed
characterization     of     superlocality    for     the     bipartite
single-input-multiple-output   scenario  appears   in  Sec.    4.1  of
\cite{Zha12}. Bounds  on the  quantum dimension required  to reproduce
Bell  correlations  in  the  bipartite  multiple-input-multiple-output
scenario are discussed in \cite{WS16}.

\section{The polytope of nonsignaling boxes}

In  the  formalism  of   generalized  no-signaling  theory,  bipartite
correlations are treated as ``boxes'' shared between two parties.  Let
us denote the input variables on  Alice's and Bob's sides $x$ and $y$,
respectively,  and the  outputs $a$  and  $b$, as  depicted in  Figure
\ref{fig:scenario}.  We restrict ourselves to the state space in which
the  boxes  have two  binary  inputs  and  two binary  outputs,  i.e.,
$x,y,a,b \in \{0,1\}$.  In this case,  the state of every box is given
by   the   set   of   $16$   conditional   probability   distributions
$P(ab|A_xB_y)$.
\begin{figure}
\includegraphics[width=6cm]{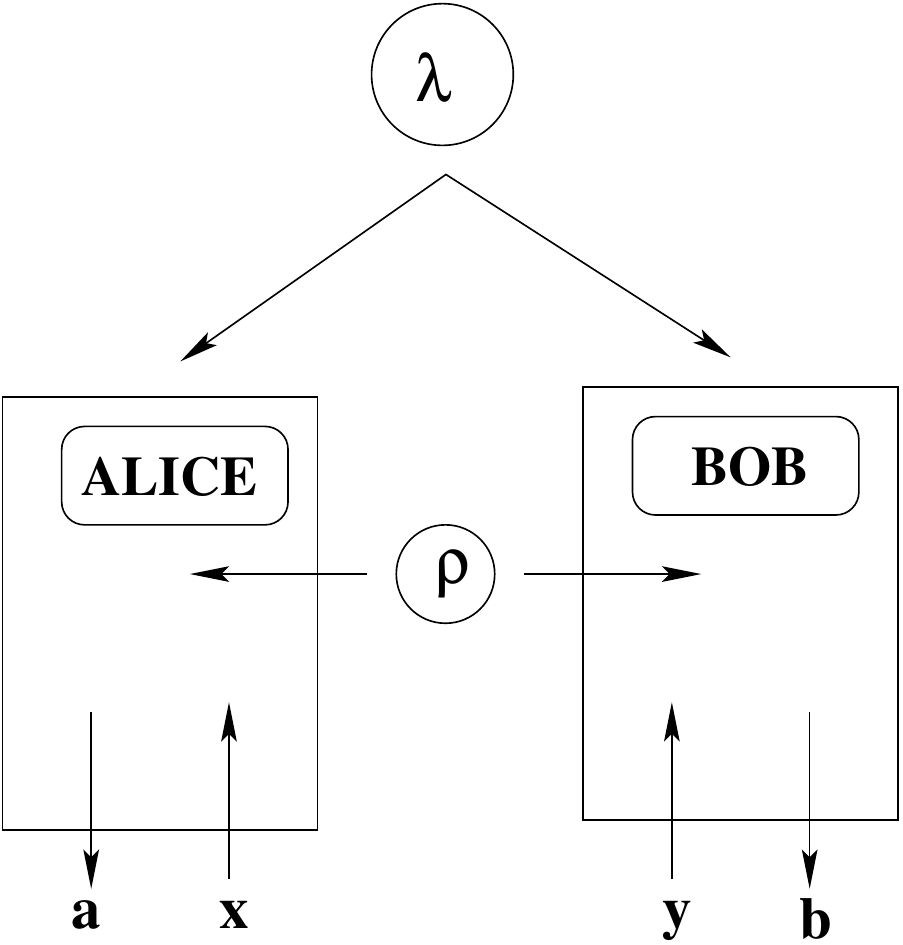}
\caption{Bell-CHSH  2-input  2-output  scenario  appropriate  for  the
  correlations  considered here,  where $x$  and $y$  are Alice's  and
  Bob's  inputs, respectively;  $a$  and $b$  are  their outputs;  and
  $x,y,a,b \in \{0,1\}$. Local quantum  state $\rho$ is shared between
  them, giving  rise to  probability $P(ab|xy)$.  The  simulability of
  this  probability by  pre-sharing  classical  variable $\lambda$  of
  bounded dimension is considered here.}
\label{fig:scenario}
\end{figure}

Barrett  \etal  \cite{BLM+05} showed  that  the  set $\mathcal{N}$  of
two-input-two-output non-signaling (NS) boxes forms an $8$ dimensional
convex  polytope with  $24$ extremal  boxes, the  $8$ Popescu-Rohrlich
(PR) boxes:
\begin{align}
&P^{\alpha\beta\gamma}_{PR}(ab|A_xB_y)
\\&=\left\{
\begin{array}{lr}
\frac{1}{2}, & a\oplus b=x\cdot y \oplus \alpha x\oplus \beta y \oplus \gamma\\ 
0 , & \text{otherwise}\\
\end{array}
\right. \label{NLV}
\end{align}
and $16$ local-deterministic boxes
\begin{equation}
P^{\alpha\beta\gamma\epsilon}_D(ab|A_xB_y)=\left\{
\begin{array}{lr}
1, & a=\alpha x\oplus \beta\\
   & b=\gamma y\oplus \epsilon \\
0 , & \text{otherwise}.\\
\end{array}
\right.   
\label{eq:locdet}
\end{equation}   
Here, $\alpha,\beta,\gamma,\epsilon\in  \{0,1\}$ and  $\oplus$ denotes
addition modulo  $2$.  All the  deterministic boxes can be  written as
the   product   of  marginals   corresponding   to   Alice  and   Bob,
$P_D(ab|A_xB_y)=P_D(a|A_x)P_D(b|B_y)$, whereas the $8$ PR-boxes cannot
be written  in product  form or  even a  convex combination  over such
product boxes. The marginals of the PR boxes are maximally mixed, i.e.
$P(a|A_x)=\frac{1}{2}=P(b|B_y)$ for all $x,y,a,b$.

The polytope $\mathcal{N}$ can be  divided into two disjoint sets: the
local  polytope,  which   is  the  convex  hull  of   the  $16$  local
deterministic  boxes (\ref{eq:locdet}),  and its  nonlocal complement.
The  extremal  boxes in  a  given  set  are equivalent  under  ``local
reversible operations''  (LRO).  LRO  is given  by Alice  changing her
input $x\rightarrow x\oplus 1$, and changing her output conditioned on
the input: $a\rightarrow a\oplus\alpha x\oplus\beta$.  Bob can perform
similar operations.

Fine \cite{Fin82}  showed that  a box has  a LHV model  iff it  can be
written in the  above form. A local box satisfies  the complete set of
Bell-type   inequalities  \cite{WW01a}.    The  Bell-CHSH   inequality
\cite{CHS+69} and its symmetries which are given by,
\begin{align}
&\mathcal{B}_{\alpha\beta\gamma} := (-1)^\gamma\braket{A_0B_0}+(-1)^{\beta \oplus \gamma}\braket{A_0B_1}\nonumber\\
&+(-1)^{\alpha \oplus \gamma}\braket{A_1B_0}+(-1)^{\alpha \oplus \beta \oplus \gamma \oplus 1} \braket{A_1B_1}\le2, 
\label{BCHSH}
\end{align}
form the  complete set,  where $\braket{A_xB_y}=\sum_{ab}(-1)^{a\oplus
  b}P(ab|A_xB_y)$.  All  these tight  Bell-type inequalities  form the
nontrivial facets for  the local polytope.  All  nonlocal boxes, which
lie outside the local polytope, violate a Bell-CHSH inequality.

Quantum boxes which belong to the Bell-CHSH scenario \cite{CHS+69} are
obtained obtained by two  dichotomic measurements on bipartite quantum
states    described    by    the    density    matrix    $\rho_{AB}\in
\mathcal{B}(\mathcal{H}_A\otimes\mathcal{H}_B)$,  the  set of  bounded
operators in the joint Hilbert space  of the two particles. The Born's
rule predicts the behavior of the quantum boxes as follows:
\begin{equation}
P(ab|A_xB_y)   =    \mathrm{Tr}   \left(\rho_{AB}\mathcal{M}_{A_x}^{a}
\otimes \mathcal{M}_{B_y}^{b}\right),\label{QNS}
\end{equation}
where  $\mathcal{M}_{A_x}^{a}$  and  $\mathcal{M}_{B_y}^{b}$  are  the
measurement operators generating binary outcomes.  A box is quantum if
it can  be written in the  above form.  Otherwise, it  is non-quantum.
Quantum nonlocal  boxes in  the two-input-two-output Bell  violate the
Bell-CHSH inequality up to $2\sqrt{2}$ \cite{Cir80}.

The set  of quantum  boxes over all  input-output scenarios  is convex
\cite{WW01}. With unrestricted dimension, any local box can be written
in the form given in Eq.  (\ref{QNS}).  The set $\mathcal{L}$ of local
boxes and the set $Q$ of  quantum boxes satisfy $\mathcal{L} \subset Q
\subset \mathcal{N}$.  But note that with dimensional restriction, the
set of quantum correlations is  not in general convex \cite{DW15}.  

     Noting  that quantum  discord \cite{OZ01,HV01,MBA+12}  provides a
measure of  quantum correlations  going beyond nonlocality,  a natural
question   that   arises   here   concerns   its   relationship   with
superlocality, an issue which we address in Section \ref{conc}.

In  this work,  we  characterize correlated  boxes  arising from  spin
projective measurements  $A_x=\hat{a}_x \cdot \vec{\sigma}$  and $B_y=
\hat{b}_y\cdot  \vec{\sigma}$  along  the directions  $\hat{a}_x$  and
$\hat{b}_y$ on two-qubit systems.   Here, $\vec{\sigma}$ is the vector
of Pauli matrices.

\section{Bell nonclassicality of local boxes}

The generation of  local boxes using quantum  systems requires neither
entanglement nor  noncommutativity.  This  follows from the  fact that
such local  boxes can be classically  simulated without communication.
However,  when  the local  Hilbert space  dimensions  of the  measured
system are restricted,  the quantum simulation of  certain local boxes
may  require  both noncommuting  measurements  and  states which  have
quantum correlations.

The noisy PR-box  \cite{MAG06} is a mixture of  a PR-box and white
noise,
\begin{equation} 
P=p_{PR}P_{PR}+(1-p_{PR})P_N,
\label{PRiso} 
\end{equation} 
where $p_{PR}$ is a real number such that $0\le p_{PR}\le 1$, $P_{PR}$
denotes the  PR-box $P^{000}_{PR}$, and  $P_N$ is the  maximally mixed
box,  i.e., $P_N(ab|A_xB_y)=1/4$  for  all  $x,y,a,b$.  The  noisy
PR-box      violates     the      Bell-CHSH     inequality,      i.e.,
$\mathcal{B}_{000}=4p_{PR}>2$     iff    $p_{PR}>\frac{1}{2}$.      If
$p_{PR}>1/\sqrt{2}$,  in   (\ref{PRiso}),  then   $\mathcal{B}_{000}  >
2\sqrt{2}$   in  violation   of  the   Tsirelson  bound   for  quantum
correlations, and hence is physically prohibited.

We discuss two methods by which the box (\ref{PRiso}) can be generated
quantum  mechanically.   In  both   cases,  we  use  the  noncommuting
measurements   $A_0   =   \sigma_x$,   $A_1  =   \sigma_y$,   $B_0   =
\frac{1}{\sqrt{2}}(\sigma_x-\sigma_y)$        and        $B_1        =
\frac{1}{\sqrt{2}}(\sigma_x+\sigma_y)$.

In the first method, the above  measurements are applied to the family
of pure entangled states:
\begin{equation}
\ket{\psi(\theta)}  =  \cos\theta\ket{00}+\sin\theta\ket{11}; \quad  0
\le \theta \le \pi/4, \label{nmE}
\end{equation}
which produces the noisy CHSH box:
\begin{equation}
P^\mathcal{C}_{CHSH}       =       \frac{2+(-1)^{a\oplus       b\oplus
    xy}\sqrt{2}\mathcal{C}}{8}. \label{chshfam}
\end{equation}
Here $\mathcal{C}=\sin2\theta$, the  concurrence \cite{Woo98} of state
(\ref{nmE}).  The  above statistics  can be  written in  the noisy
PR-box form with $p_{PR}=\mathcal{C}/\sqrt{2}$.  So, $p_{PR}>0$ if and
only if the state is entangled $(\theta>0$).  This holds even when $P$
becomes local ($\theta < \frac{\pi}{8}$).

In the second  method to generate the box  (\ref{PRiso}), consider the
two-qubit Werner states,
\begin{equation}
\rho_W=W\ketbra{\psi^+}{\psi^+}+(1-W)\frac{\openone}{4}, 
\label{eq:werner}
\end{equation}
where $\ket{\psi^+}=\frac{1}{\sqrt{2}}(\ket{00}+\ket{11})$.  The above
states are entangled iff $W>\frac{1}{3}$ \cite{Wer89} and nonlocal iff
$W>\frac{1}{\sqrt{2}}$.   It  is known  that  the  Werner states  have
nonzero  quantumness   (as  quantified  by  discord)   for  any  $W>0$
\cite{OZ01}.  For  the noncommuting  measurements that we  used above,
the Werner state (\ref{eq:werner}) gives  rise to the noisy PR-box
(\ref{PRiso}) with $p_{PR}= \frac{W}{\sqrt{2}}$.

Even in the local range $0  < p_{PR} \le \half$, the box (\ref{PRiso})
cannot be reproduced  by a pre-sharing just one bit  each of classical
random  correlation.   To  see  this,  note  that  the  noisy PR box
corresponds to the following correlations:
\begin{subequations}
\begin{eqnarray}
&& \langle  A_0\rangle  = \langle  A_1\rangle  =  \langle B_0\rangle  =
\langle B_1\rangle = 0. \label{eq:qcorra}\\ 
&& \langle A_0B_0 \rangle = \langle
A_0B_1 \rangle=  \langle A_1B_0  \rangle =  -\langle A_0B_0  \rangle =
p_{PR}.
\label{eq:qcorrb}
\end{eqnarray}
\label{eq:qcorr}
\end{subequations}
Quite generally, suppose that  the pre-shared bit $\lambda$ determines
the following indeterministic strategy.  Alice outputs $a$ conditioned
on  input   $x$  and  pre-shared  value   $\lambda$  with  probability
$P_{A,\lambda}(a|x)$,  and similarly  Bob outputs  $b$ conditioned  on
input $y$  and $\lambda$  with probability  $P_{B,\lambda}(b|y)$.  The
value  of  $\lambda  \in  \{0,1\}$ is  distributed  according  to  the
probability distribution $P_\lambda(\lambda)$.
Eq. (\ref{eq:qcorra}) implies:
\begin{align}
P_{A,0}(0|0)P_\lambda(0) + P_{A,1}(0|0)P_\lambda(1) &= \half \nonumber \\
P_{A,0}(0|1)P_\lambda(0) + P_{A,1}(0|1)P_\lambda(1) &= \half \nonumber \\
P_{B,0}(0|0)P_\lambda(0) + P_{B,1}(0|0)P_\lambda(1) &= \half \nonumber \\
P_{B,0}(0|1)P_\lambda(0) + P_{B,1}(0|1)P_\lambda(1) &= \half.
\label{eq:qcorra+}
\end{align}
Eq.  (\ref{eq:qcorrb}) implies:
\begin{align}
P_{A,0}(0|0)P_{B,0}(0|0)P_\lambda(0) &+ P_{A,1}(0|0)P_{B,1}(0|0)P_\lambda(1) = 
\frac{1+p}{4}\nonumber \\
P_{A,0}(0|1)P_{B,0}(0|0)P_\lambda(0) &+ P_{A,1}(0|1)P_{B,1}(0|0)P_\lambda(1) =
\frac{1+p}{4} \nonumber \\
P_{A,0}(0|0)P_{B,0}(0|1)P_\lambda(0) &+ P_{A,1}(0|0)P_{B,1}(0|1)P_\lambda(1) =
\frac{1+p}{4}\nonumber \\
P_{A,0}(0|1)P_{B,0}(0|1)P_\lambda(0) &+ P_{A,1}(0|1)P_{B,1}(0|1)P_\lambda(1) =
\frac{1-p}{4},
\label{eq:qcorrb+}
\end{align}
where we  make use of  the normalization for the  relevant conditioned
probabilities.

Now,  subtracting the  first two  equations of  (\ref{eq:qcorra+}), we
find:
\begin{align}
(P_{A,0}(0|0)  &-P_{A,0}(0|1))P_\lambda(0) + \nonumber\\
&(P_{A,1}(0|0)-P_{A,1}(0|1))P_\lambda(1) = 0,
\label{eq:qcorradiff}
\end{align}
while subtracting  the first  two equations of  (\ref{eq:qcorrb+}), we
find:
\begin{align}
(P_{A,0}(0|0) &-P_{,0}A(0|1))P_{B,0}(0|0)P_\lambda(0) + \nonumber\\
&(P_{A,1}(0|0)-P_{A,1}(0|1))P_{B,1}(0|0)P_\lambda(1) = 0.
\label{eq:qcorrbdiff}
\end{align}
From   Eqs.   (\ref{eq:qcorradiff})   and  (\ref{eq:qcorrbdiff}),   we
determine that $P_{B,0}(0|0) = P_{B,1}(0|0)$.   Plugging this in the first
equation  of (\ref{eq:qcorra+}),  we find  $P_{B,0}(0|0) =  P_{B,1}(0|0) =
\half$.

Proceeding     thus    for     other    conditional     probabilities,
$P_{A,\lambda}(a|x)$   and  $P_{B,\lambda}(b|y)$,   we  derive   their
measurement independence  on the  underlying pre-shared  variable, and
their  value  to  be  $\half$.   Substituting  these  values  for  the
conditional   probabilities    in   the   first   equation    of   Eq.
(\ref{eq:qcorrb+}),  we  find  $p_\lambda(0)  +  p_\lambda(1)=1+p$,  a
contradiction for any  $p>0$.  This entails that the  dimension of the
classical  system  simulating  the   noisy PR  box  must  exceed  the
dimension   (two)   of   the   qubit,    and   is   an   instance   of
\textit{superlocality} \cite{DW15}.

This   observation   prompts   us  to   operationally   identify   the
nonclassicality of  the box  (\ref{PRiso}) with  superlocality.  Since
this characterization of  nonclassicality depends only on  the box and
not how it is generated, our  approach gives a general way to approach
nonclassicality  in local  correlations  in  an arbitrary  operational
theory.  For noisy PR boxes, we identify $p_{PR}$ as a nonclassicality
measure. Our  result means that all  points in the diagonal  in Figure
\ref{fig:nonconv}, even  those with  the local region  \textbf{L}, are
superlocal.
\begin{figure}
\includegraphics[width=6cm]{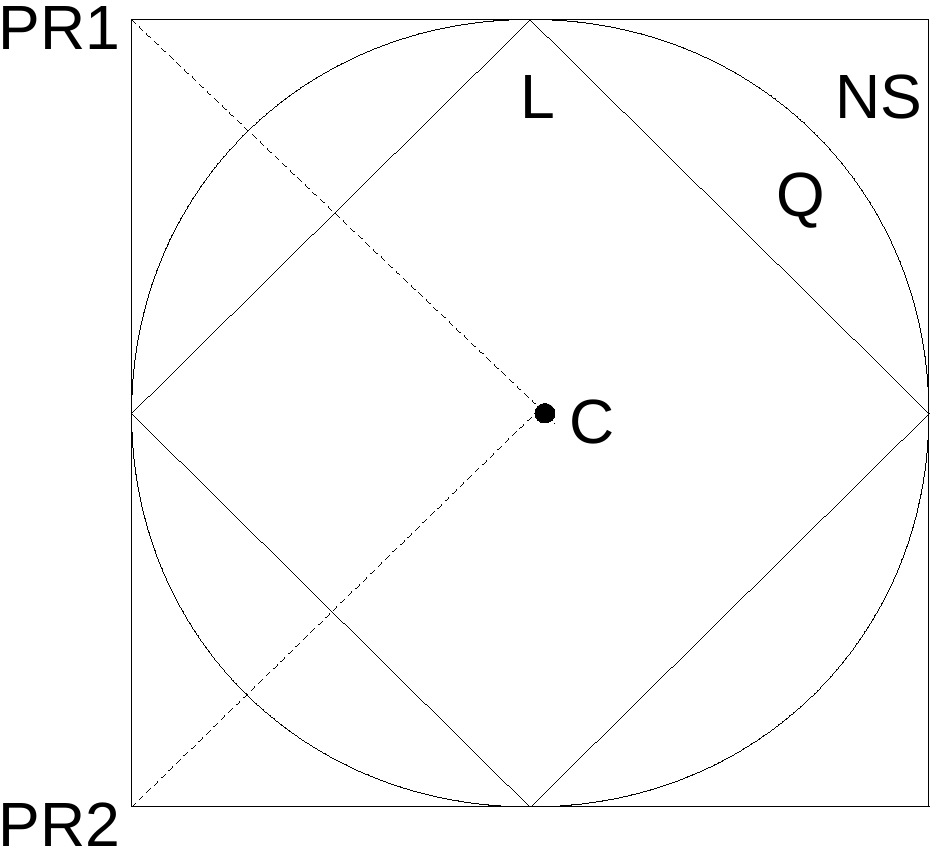}
\caption{A section of the correlations  that appear in the scenario of
  Figure \ref{fig:scenario}.  The region  \textbf{NS} within the outer
  square,  bounded  by  PR   boxes  \textbf{PR1},  \textbf{PR2},  etc.
  represents  nonsignaling correlations,  with the  quantum realizable
  ones given by  the circular region \textbf{Q}, a subset  of which is
  the local polytope \textbf{L}.  The center \textbf{C} corresponds to
  the white  noise box.  In  each dashed diagonal line  radiating from
  \textbf{C} towards  a PR box  but excluding \textbf{C},  the segment
  within \textbf{Q}  represents a family  of quantum boxes  having the
  form   of  noisy   PR-boxes,  which   are  superlocal   even  within
  \textbf{L}.}
\label{fig:nonconv}
\end{figure}

\section{Entropic superlocality\label{BD}}

The  noisy   local   CHSH   box (\ref{chshfam}) can be decomposed in 
terms of the local deterministic boxes alone as follows:
\begin{align}
P^{\mathcal{C}=1/\sqrt{2}}_{CHSH}&=\frac{1}{8}\sum_{\alpha\beta\gamma}P^{\alpha\beta\gamma(\alpha\gamma\oplus\beta)}_D(ab|xy) \\
 &= \frac{1}{4}\left(\frac{P^{0000}_D+P^{1000}_D}{2}\right)+\frac{1}{4}\left(\frac{P^{0010}_D+P^{1110}_D}{2}\right) \nonumber \\
&+ \frac{1}{4}\left(\frac{P^{0101}_D+P^{1101}_D}{2}\right)+\frac{1}{4}\left(\frac{P^{0111}_D+P^{1011}_D}{2}\right) \label{eq:simula}\\
&\equiv \frac{1}{4}\left(\Delta_1 
+ \Delta_2 + \Delta_3 + \Delta_4\right) \nonumber\\
 &=\frac{1}{2}P^{000}_{PR}+\frac{1}{2}P_N.  
\end{align} 
Accordingly,  for   $p_{PR}\le1/2$,  the  noisy  PR   box  can  be
decomposed as follows:
\begin{align}
&p_{PR}P^{000}_{PR}+(1-p_{PR})P_N \nonumber \\
&=
\frac{2p_{PR}}{8}\sum_{\alpha\beta\gamma}P^{\alpha\beta\gamma(\alpha\gamma\oplus\beta)}_D(ab|xy)+
(1-2p_{PR})P_N, 
\nonumber \\
 &=\frac{2p_{PR}}{4}\sum_{j=1}^4 \Delta_j(ab|xy) + (1-2p_{PR})P_N. 
\label{eq:bellstr-} \\ 
&=\frac{1}{4}\sum_{\lambda=1}^{4}P_\lambda(a|A_x)P_\lambda(b|B_y),
\label{eq:bellstr}
\end{align}
where one of the parties  (here, Alice) uses nondeterministic  
strategies given by:
\begin{align}
P_1(a|A_x)&=(P^{00}_D+2p_{PR}P_D^{10}+(1-2p_{PR})P_D^{01})/2, \nonumber \\
P_2(a|A_x)&=(P^{00}_D+2p_{PR}P_D^{11}+(1-2p_{PR})P_D^{01})/2, \nonumber \\
P_3(a|A_x)&=(P^{01}_D+2p_{PR}P_D^{11}+(1-2p_{PR})P_D^{00})/2, \nonumber \\
P_4(a|A_x)&=(P^{01}_D+2p_{PR}P_D^{10}+(1-2p_{PR})P_D^{00})/2,
\end{align}
while the other (here, Bob) uses deterministic strategies given by:
\begin{align}
&P_1(b|B_y)=P^{00}_D, P_2(b|B_y)=P^{10}_D, \nonumber \\
&P_3(b|B_y)=P^{01}_D, P_4(b|B_y)=P^{11}_D.
\end{align} 

The  expression (\ref{eq:bellstr})  determines a  classical simulation
protocol  with dimension  four, which  is known  to suffice  for local
polytope  in   the  Bell-CHSH   scenario  \cite{DW15}.   We   can  use
(\ref{eq:bellstr-}) to define a  classical communication protocol that
bounds  from  above the  average  pre-shared  information required  to
simulate an noisy PR box.

Assume that  we wish to simulate  an experiment with $n$  trials, with
sufficiently large $n$.  Alice and Bob pre-share a five-symbol string,
say with  symbols $\sigma=0,1,2,3,4$,  such that  the 0's  will appear
with probability  $1-2p_{PR}$ and determine coordinates  where each of
them  independently outputs  unbiased random  bits, when  given either
input.  The remaining  $\sigma$  values,  uniformly distributed,  will
determine  when they  will use  one of  the above  local probabilistic
strategies $\Delta_j$.

In other words,  we require a Shannon encoding for  a source with five
symbols determined  by the probability  distribution $\left(1-2p_{PR},
\frac{p_{PR}}{2},          \frac{p_{PR}}{2},         \frac{p_{PR}}{2},
\frac{p_{PR}}{2}\right)$,   where  $p_{PR}\le\half$.    Therefore,  on
average,  per trial  Alice and  Bob must  pre-share $l(p_{PR})$  bits,
where
\begin{equation}
l(x) \equiv (1-2x)\log(1-2x) + 2x \log\left(\frac{x}{2}\right).
\label{eq:L}
\end{equation} 
We  find that  $l(x)  \ge  1$ for  $x\gtrsim  0.085$.  Therefore,  the
noisy PR box may be  considered, on average, entropically superlocal,
and thus nonclassical, in the range $0.085 \lesssim p_{PR} \le 0.5$.

\section{Towards quantifying superlocality}

We  now  propose  a  measure   to  quantify  superlocality,  which  is
constructed  to work  for noisy  PR  boxes.        This would  provide
insight on  connecting superlocality  to measures of  quantumness that
would not  necessarily vanish  for separable  states, such  as quantum
discord,  and  that  may  potentially lead  to  quantify  measures  of
correlations   going  beyond   nonlocality,   applicable  to   general
non-signaling correlations.

Later below we will discuss our new measure's limitations when applied
to other  boxes.  Essentially, we  require a quantification of  the PR
box  fraction in  a noisy  PR  box that  would be  independent of  the
particular PR box.  We call this measure ``Bell strength'', because it
employs the Bell correlator.

Define  the  absolute  Bell functions  $\mathcal{B}_{2\alpha+\beta}  =
|\braket{A_0B_0}        +         (-1)^{\beta}\braket{A_0B_1}        +
(-1)^{\alpha}\braket{A_1B_0}  + (-1)^{\alpha  \oplus  \beta \oplus  1}
\braket{A_1B_1}|$.  We construct the following triad of quantities
\begin{eqnarray}
\Gamma_1&:=&\tau\left(\mathcal{B}_0,\mathcal{B}_1,\mathcal{B}_2,\mathcal{B}_3\right)\nonumber\\
\Gamma_2&:=&\tau\left(\mathcal{B}_0,\mathcal{B}_2,\mathcal{B}_1,\mathcal{B}_3\right) \nonumber\\
\Gamma_3&:=&\tau\left(\mathcal{B}_0,\mathcal{B}_3,\mathcal{B}_1,\mathcal{B}_2\right),\label{gi}
\end{eqnarray}
where    
\begin{equation}
\tau\left(\mathcal{B}_0,   \mathcal{B}_1,    \mathcal{B}_2,
\mathcal{B}_3\right)  \equiv \Big||\mathcal{B}_0  - \mathcal{B}_1  | -
|\mathcal{B}_2  - \mathcal{B}_3|\Big|,
\end{equation}  
and  so on.   There are  24 permutational  possibilities for  function
$\tau$, but because of the  three two-fold symmetries $\tau(a,b,c,d) =
\tau(b,a,c,d)=\tau(a,b,d,c)=\tau(c,d,a,b)$,   there   is   an   8-fold
redundancy,  so  that  only  three   terms  $\Gamma_j$,  as  given  in
Eq. (\ref{gi}), are independent.

Here $\Gamma_i$ are  constructed such that it  satisfies the following
axiomatic  properties: (a)  $\Gamma_i\ge0$; (b)  $\Gamma_i=0$ for  any
$P_D^{\alpha\beta\gamma\epsilon}$;  (c)  the  $\Gamma_i$  attains  its
maximum (of 4) on PR-boxes.

We  define Bell strength as:
\begin{equation}
\Gamma := \min_i \Gamma_i. 
\label{eq:G}
\end{equation}
The  quantity $\Gamma$  is  manifestly  LRO invariant.   Further,
$\Gamma(P)=4p_{PR}$ for the noisy PR box (\ref{PRiso}).

Any noisy PR  box, $P=pP^{\alpha\beta\gamma}_{PR}+(1-p)P_N$ with $p>0$
has the property that only one of the Bell functions is nonzero.  This
follows from the fact that $\mathcal{B}_j(P_N)=0$ for any $j$, and the
property of \textit{monoandry}, described below.

Given a  no-signaling correlation shared by  three particles, monogamy
\cite{Ton06}  refers to  a  bound  on the  sum  of  the absolute  Bell
function values for two different  pairs of particles, with respect to
any  fixed  Bell  operator  (say  $\mathcal{B}_{0}$).   In  contrast,
monoandry refers to a bound on the  sum of the absolute values for two
different     Bell    operators     (say    $\mathcal{B}_{0}$     and
$\mathcal{B}_{3}$), with respect to the same pair of particles.

Any given PR  box has a tight association with  the Bell functions, in
that it takes the  value 4 on precisely one of  the four absolute Bell
functions,  and vanishes  for  the rest.   For each  of  the 16  local
deterministic boxes, the  absolute Bell function takes  the value $2$.
This leads to the following monoandry relation:
\begin{equation}
\mathcal{B}_j(P) + \mathcal{B}_k(P) \le 4,
\label{eq:bellandr}
\end{equation}
where $j\ne  k$ and $j,  k \in  \{0,1,2,3\}$.  To prove  this, let
$\mathcal{B}_j^\ast$   denote  $\mathcal{B}_j$   without  taking   the
absolute   value.    Consider    the   decomposition   $\textbf{P}   =
\sum_{j=0}^{3} G_{j}  P^{j\pm}_{\rm PR} + (1-G)L_{\rm  Bell}$, where
$G = \sum_j  G_j$, and $P^{j\pm}_{\rm PR}$ is precisely  one of the PR
box/antibox    pair    $(P_{PR}^{j+},    P_{PR}^{j-})$    such    that
$\mathcal{B}_j^\ast\left(P^{j\pm}_{\rm     PR}\right)=\pm4$.      This
decomposition always  holds, since an equal  mixture a PR box  and its
antibox is the  maximally mixed state $P_N$, which can  be included in
$L_{\rm  Bell}$, the  local  box. 

Consider  any   two  distinct   $\mathcal{B}_j$'s.  Without   loss  of
generality, let these correspond to $j=0,3$. We then have:
\begin{equation}
\mathcal{B}_{0}^\ast(P) = \pm 4G_{0} \pm (1-G)\mathcal{B}_{0}^\ast(L_{\rm Bell}),
\end{equation}
which implies that
\begin{equation}
\mathcal{B}_{0}(P) \le 4G_{0} + 2(1-G).
\end{equation}
Similarly,
\begin{equation}
\mathcal{B}_{3}(P) \le 4G_{3} + 2(1-G),
\end{equation}
from which it follows that
\begin{align}
\mathcal{B}_{0}(P) + \mathcal{B}_{3}(P) &\le 4(G_{0}+G_{3}) + 4(1-G)\le 4,
\end{align}
since $G_{0}+G_{3}\le G$. Clearly, this  holds for any distinct pair
$j,k$ in Eq. (\ref{eq:bellandr}).

A  tighter version  of monoandry  as  applicable to  quantum boxes  is
reported in  Ref.  \cite{SCH16}, but the  bound (\ref{eq:bellandr}) is
applicable  to any  two-input-two-output  nonsignaling  boxes, and  is
suitable  for our  purpose.  Since  $0\le\mathcal{B}_j\le4$, monoandry
(\ref{eq:bellandr})   implies   that   $-4  \le   \mathcal{B}_j(P)   -
\mathcal{B}_k(P)  \le   4$,  from   which  it  follows   that  $\left|
\mathcal{B}_j(P)   -  \mathcal{B}_k(P)\right|   \le  4$.    Therefore,
$0\le\Gamma\le4$.

If  a   box  $P$  is  nonlocal,   then  there  is  a   $j$  such  that
$\mathcal{B}_j(P)>2$.   By  monoandry,  $\mathcal{B}_k(P)<2$  for  all
$k\ne  j$,  so  that  none  of  the  $\Gamma_j$  will  vanish  in
Eq. (\ref{gi}).   Therefore, $\Gamma(P)>0$, entailing  that boxes
with vanishing $\Gamma$ are necessarily local.  But not all local
boxes satisfy $\Gamma=0$.  This arises from the fact that the set
of boxes characterized by the  property $\Gamma=0$ is not convex,
unlike the local polytope.  In particular, the convex property
\begin{equation}
\Gamma\left(\sum_j p_j P_j\right) \le 
\sum_j p_j\Gamma\left(P_j\right),
\label{eq:g}
\end{equation}
fails when the  correlations $P_j$ in Eq.   (\ref{eq:g}) correspond to
the local-deterministic  boxes. For  these, $\Gamma=0$,  but then
even  local   boxes  arising   from  noncommuting   measurements  have
nonvanishing $\Gamma$, according to the following result.

\begin{thm}\label{thm3}
Locally commuting projective  measurements entail that $\Gamma=0$
for any two-qubit state.
\end{thm}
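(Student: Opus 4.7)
The plan is to exploit the elementary fact that commuting $\pm 1$ qubit observables must be proportional, which in turn collapses all four absolute Bell functions $\mathcal{B}_j$ to a single common value and hence annihilates each $\Gamma_i$.

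First I would reduce the geometry of the measurements. Any nontrivial two-outcome projective measurement on a qubit has the form $A = \hat{a}\cdot\vec{\sigma}$ for a unit Bloch vector $\hat{a}$, and the commutator identity $[\hat{a}_0\cdot\vec{\sigma},\hat{a}_1\cdot\vec{\sigma}] = 2i(\hat{a}_0\times\hat{a}_1)\cdot\vec{\sigma}$ shows that $[A_0,A_1]=0$ forces $\hat{a}_1 = \epsilon_A\hat{a}_0$ for some $\epsilon_A\in\{+1,-1\}$; equivalently $A_1 = \epsilon_A A_0$, and analogously $B_1 = \epsilon_B B_0$. Setting $E := \langle A_0 B_0\rangle_\rho$, every correlator then factorises as $\langle A_xB_y\rangle_\rho = \epsilon_A^{\,x}\epsilon_B^{\,y}E$, so all four correlators share the common magnitude $|E|$ and differ only through a pair of signs controlled by $(\epsilon_A,\epsilon_B)$.

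Next I would substitute this factorisation into the absolute Bell functions defined just before Eq.~(\ref{gi}). Introducing $u := (-1)^\alpha \epsilon_A$ and $v := (-1)^\beta \epsilon_B$, each in $\{+1,-1\}$, a direct expansion yields
\begin{equation*}
\mathcal{B}_{2\alpha+\beta} \;=\; |E|\,\bigl|\,1+u+v-uv\,\bigr|.
\end{equation*}
Inspection of the four $(u,v)$ sign patterns shows $|1+u+v-uv|=2$ identically, so every $\mathcal{B}_j$ equals $2|E|$. All four differences $|\mathcal{B}_j - \mathcal{B}_k|$ entering the $\Gamma_i$ in Eq.~(\ref{gi}) therefore vanish, and $\Gamma = \min_i \Gamma_i = 0$.

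The main obstacle is essentially organisational: one must establish the collinearity of the Bloch vectors cleanly (noting in passing that the degenerate case $A_x = \pm I$ produces a deterministic local outcome and is disposed of separately), and then dispatch the four sign combinations in one stroke via the identity $|1+u+v-uv|=2$ rather than by tedious enumeration of each $\mathcal{B}_j$. The substantive content of the theorem is conceptual: it certifies that $\Gamma$ detects only the kind of nonclassicality that genuinely requires incompatible local measurements, supporting its candidacy as a quantifier of superlocality beyond nonlocality.
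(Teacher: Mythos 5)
Your argument is correct for the hypothesis it actually uses---that the observables commute on \emph{both} wings---and the algebra is sound: collinearity of the Bloch vectors from the commutator identity, the factorisation $\langle A_xB_y\rangle=\epsilon_A^{x}\epsilon_B^{y}E$, and the sign identity $|1+u+v-uv|=2$ together collapse all four $\mathcal{B}_j$ to the common value $2|E|$, so every $\Gamma_i$ vanishes. The paper's proof, however, takes a different and strictly stronger route: it imposes commutativity only on Alice's side and leaves Bob's pair of measurements completely arbitrary. Writing $\langle A_xB_y\rangle=\hat a_x\cdot C\hat b_y$ with $\hat a_0=\hat a_1$ gives $\mathcal{B}_0=\mathcal{B}_1=2|\hat a_0\cdot C\hat b_0|$ and $\mathcal{B}_2=\mathcal{B}_3=2|\hat a_0\cdot C\hat b_1|$; these two values need not coincide, but the two pairwise coincidences already force $\Gamma_1=\bigl||\mathcal{B}_0-\mathcal{B}_1|-|\mathcal{B}_2-\mathcal{B}_3|\bigr|=0$ and hence $\Gamma=\min_i\Gamma_i=0$. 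Your extra assumption $B_1=\epsilon_BB_0$ is therefore dispensable, and dispensing with it matters for what follows in the paper: the one-sided version's contrapositive says that $\Gamma>0$ certifies measurement incompatibility on \emph{each} wing separately, not merely on at least one. In partial compensation, your proof is more careful on a point the paper glosses over: commutativity of traceless qubit observables allows $\hat a_1=-\hat a_0$ as well as $\hat a_1=+\hat a_0$, and you cover both branches via $\epsilon_A=\pm1$ (in the anti-aligned case the pairing $\mathcal{B}_0=\mathcal{B}_1$, $\mathcal{B}_2=\mathcal{B}_3$ persists with the two values swapped, so the paper's conclusion survives, but this deserves to be said explicitly).
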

\begin{proof}
Any two-qubit state, up to local unitary equivalence, can be represented as
\cite{Luo08} 
\begin{eqnarray}
 \rho_{AB}&=&\frac{1}{4}(\openone_A\otimes\openone_B+
 \vec{r}\cdot\vec{\sigma}\otimes\openone_B+\openone_A\otimes\vec{s}\cdot\vec{\sigma}
 \nonumber
 \\ &&+\sum\limits_{i=1}^{3}c_{i}\sigma_i\otimes\sigma_i), \label{a2q}
\end{eqnarray}
where  the  coefficients  $c_i=\tr(\rho_{AB}\sigma_i\otimes\sigma_i)$,
$i=x,y,z$,   form   a  diagonal   matrix   denoted   by  $C$.    Here,
$|\vec{r}|^2+|\vec{s}|^2+||C||^2\le3$  with equality  holding for  the
pure states. The expectation value of the above states is given by,
\begin{equation}
\braket{A_xB_y}=\hat{a}_x\cdot C\hat{b}_y.
\end{equation}
Let  us   calculate  $\Gamma$   for  the   states  as   given  in
Eq. (\ref{a2q}) for commuting measurements on Alice's side. Suppose we
choose measurement  directions as  $\hat{a}_0=\hat{a}_1=\hat{a}$, then
the measurement  observables commute,  i.e., $[A_0,A_1]=0$.   For this
choice  of  commuting  measurements  on Alice's  side,  the  state  in
Eq.      (\ref{a2q})     has      $\mathcal{B}_{0}=\mathcal{B}_{1}=2
\hat{a}_0\cdot C\hat{b}_0$,  and, $\mathcal{B}_{2}=\mathcal{B}_{3}=2
\hat{a}_0\cdot  C\hat{b}_1$. These  values imply  that $\Gamma=0$
for  any choice  of commuting  measurements  on Alice's  side and  any
choice of commuting/non-commuting measurements on Bob's side.  \hfill
\end{proof}

The above result  means that there exist  product boxes (characterized
by   vanishing  pre-shared   dimension)  that   have  $\Gamma>0$.
Therefore,  if  one  goes  beyond noisy  PR  boxes  then  nonvanishing
$\Gamma$   does   not   entail   superlocality.    Such   nonzero
$\Gamma$ product  boxes do  have quantumness due  to noncommuting
measurements,   which   leads   to    local   randomness,   but   this
nonclassicality is not pointed out by superlocality.  \bla

\section{Conclusions and Discussion}\label{conc}

Nonlocality or  steerability in  the given  correlations (or,  box) in
quantum mechanics or in an  arbitrary convex operational theory can be
characterized in terms of the  communication cost that must supplement
pre-shared randomness  in order  to simulate it.   The question  of an
analogous characterization  of nonclassicality arising  from separable
states is addressed here, and associated with superlocality.

However, it  should be pointed  out that the quantumness  indicated by
superlocality  does  not  detect   all  quantum  discord  states.   In
particular,  consider  classical-quantum or  quantum-classical  states
\cite{PHH08}, which have the form:
\begin{equation}
\rho_{AB} = \sum_{j=0,1} p_j |j\rangle\langle j| \otimes \rho_j.
\label{eq:cq}
\end{equation}
In   the   Bell-CHSH   scenario,   for  Alice   measuring   in   basis
$\{|j\rangle\}$, it is clear that  the resulting box can be simulating
by a probabilistic strategy using dimension 2.  This observation holds
even when  Alice measures in any  other basis (except that  her random
number generator will be possibly be more randomized).

It  follows   that  zero-discord  states,   i.e.   classical-classical
correlations   (corresponding   to    orthogonal   $\rho_j$   in   Eq.
(\ref{eq:cq})), are also  non-superlocal. Therefore, superlocal states
are a subset of states with \textit{quantum-quantum} correlations, and
thus  a  strict  subset  of discordant  states.   This  suggests  that
superlocality does not  encompass all of the  nonclassicality in local
quantum states.

As our approach  applies to boxes rather than  specifically to quantum
states,  it leads  in a  natural way  to nonclassicality  in bipartite
states in an arbitrary convex operational theory.  These consideration
can  be extended  to  tripartite  \cite{Jeb16,Jeb14} and  multipartite
boxes.    In  this   context,   Refs.   \cite{Per12,ASA16}   associate
nonclassicality with the nonsimpliciality  of the state space $\Sigma$
of  such  boxes in  a  probability  theory.   Here our  criterion  for
nonclassicality, as indicated by  superlocality, applies to individual
boxes, rather than $\Sigma$.

\section*{Acknowledgements}

C.J thanks IMSc,  Chennai and IISER Mohali for  financial support, and
Dr.  P.   Rungta for for  many inspiring discussions.  He  also thanks
Prof.  N.  Sathyamurthy,  Dr.  K.  P.  Yogendran,  Dr.  Sibasish Ghosh
and  Dr.   Manik Banik  for  useful  discussions.  The  authors  thank
Dr.  E.  Wolfe  for  helpful   comments  and  references.  SA  and  CJ
acknowledge  support through  the  INSPIRE  fellowship [IF120025]  and
project SR/S2/LOP-08/2013 of the DST, Govt.  of India, respectively.

\bibliography{JEB}
\end{document}